\newtheorem{te}{Theorem}[section]
\newtheorem{co}{Corollary}[section]
\newtheorem{lemma}{Lemma}[section]
\newtheorem{conjecture}{Conjecture}[section]
\newcommand{\beq}{\begin{eqnarray}}
\newcommand{\eeq}{\end{eqnarray}}
\newcommand{\beqs}{\begin{eqnarray*}}
\newcommand{\eeqs}{\end{eqnarray*}}
\newcommand{\irr}{\rm irr}
\begin{document}
 \title{{Non-regular graphs with minimal total irregularity} }
 
 \date{}

\maketitle
\begin{center}
{\large \bf  Hosam Abdo, Darko Dimitrov}
\end{center}
\baselineskip=0.20in
\begin{center}
{\it Institute of Computer Science, Freie Universit\"{a}t Berlin,
\\ Takustra{\ss}e 9, D--14195 Berlin, Germany} \\E-mail: {\tt [abdo,darko]@mi.fu-berlin.de} \\[2mm]
\end{center}
\vspace{6mm}
\begin{abstract}
The {\it total irregularity} of a simple undirected graph $G$ is defined
as
$\irr_t(G) =$ $\frac{1}{2}\sum_{u,v \in V(G)}$ $\left| d_G(u)-d_G(v) \right|$,
where $d_G(u)$ denotes the degree of a vertex $u \in V(G)$.
Obviously, $\irr_t(G)=0$ if and only if $G$ is regular.
Here, we characterize the non-regular graphs with minimal total irregularity and
thereby resolve the recent conjecture by 
Zhu, You and Yang~\cite{zyy-mtig-2014} 
about the lower bound on the minimal total irregularity of non-regular connected graphs.
We show that the conjectured lower bound of $2n-4$ is attained only if non-regular connected graphs of even order are considered,
while the sharp lower bound of $n-1$ is attained by graphs of odd order.
We also characterize the non-regular graphs with the second and the third smallest total irregularity.
\end{abstract}
{\small \hspace{0.25cm} \textbf{Keywords:} the total irregularity, extremal graphs
%
\section[Introduction]{Introduction}
%
\medskip
All graphs considered here are undirected and simple (i.e., loops and multiple edges are not allowed).
Let $G$ be a graph of order $n=|V(G)|$ and size $m=|E(G)|$.
For $v \in V(G)$, the degree of $v$, denoted by $d_G(v)$, is the number of edges incident
to $v$.
$G$ is {\em regular} if all its
vertices have the same degree, otherwise it is {\em irregular}.
There have been proposed many approaches, including those in
\cite{Alavi-87,  Alavi-88, Albertson, Bell:1,  Char-87, Char-in-88, Char-88, CollSin-57}, 
that characterize how irregular a given graph is.
In this paper, we focus on the so-called {\it total irregularity}  of a graph \cite{Dimit-Abdo2}, defined as
\beq \label{eqn:003-t}
{\irr_t} (G) = \frac{1}{2} \sum_{u, v \in V(G)} |d_G(u)-d_G(v)|.
\eeq
The total irregularity is related to the {\it irregularity} of a graph, defined as
$
{\irr(G)} = \sum_{uv\in E(G)}$ $|d_G(u)-d_G(v)|.
$
The latter measurement was introduced by Albertson~\cite{Albertson} and investigated in several works
including \cite{Dimit-Abdo1, HanM05, HenRaut-07}. 
For motivations of introducing the total irregularity as new irregularity measure, we refer an interested reader to \cite{Dimit-Abdo2}.
Both measures, the irregularity of a graph and the total irregularity of a graph, depend only on one single parameter, 
namely the pairwise difference of vertex degrees. 
A comparison of  $\irr$ and $\irr_t$ was considered in \cite{DS-citig-2012}.
There, it  was shown that
$ \irr_t(G) \leq n^2 \, \irr(G)/4$ and
when $G$ is a tree, then 
$\irr_t (G) \leq (n-2) \, \irr(G)$.
Also, it was shown that among all trees of the same order, the star has the maximal total irregularity.

In \cite{Dimit-Abdo2}, graphs
with maximal  total irregularity were fully characterized and the upper bound on the total irregularity of a graph was presented
(see Figure~\ref{f:H_n} and Corollary~\ref{thm-graphs-max-irr_t}).
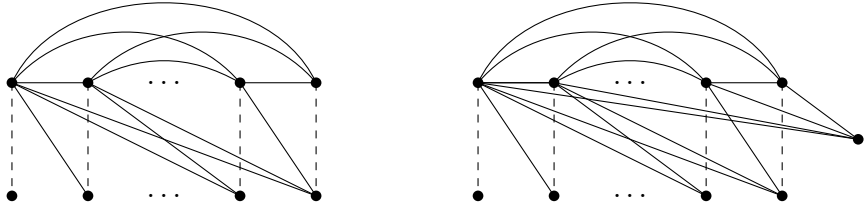
\begin{figure}[htbp] 
\vspace{-0.5cm}
\centering
\vspace{-0.1cm}
\begin{minipage}[h]{6cm}
\begin{tikzpicture}
        \path (0,0) coordinate (P0);
        \path (1,0) coordinate (P1);
        \path (3,0) coordinate (P2);
        \path (4,0) coordinate (P3);
        \path (0,-1.5) coordinate (Q0);
        \path (1,-1.5) coordinate (Q1);
        \path (3,-1.5) coordinate (Q2);
        \path (4,-1.5) coordinate (Q3);

        \foreach \x in {0,...,3} {
	\fill (P\x) circle (2pt);
	\fill (Q\x) circle (2pt);}
        \foreach \x in {0,...,3} {
             \draw[dashed] (P\x) -- (Q\x) ;}

\path   (2,0) node {$\ldots$};
\path   (2,-1.5) node {$\ldots$};

             \draw (P0) -- (Q1) ;
             \draw (P0) -- (Q2) ;
             \draw (P0) -- (Q3) ;
             \draw (P1) -- (Q2) ;
             \draw (P1) -- (Q3) ;
             \draw (P2) -- (Q3) ;
             \draw (P1) -- (P0) ;
             \draw (P3) -- (P2) ;

             \draw (P1) to[out=30,in=150] (P2) ;
             \draw (P3) to[out=115,in=65] (P0) ;
             \draw (P3) to[out=130,in=50] (P1) ;
             \draw (P2) to[out=130,in=50] (P0) ;
\end{tikzpicture}
\end{minipage}
\begin{minipage}[h]{6cm}
\begin{tikzpicture}

        \path (0,0) coordinate (P0);
        \path (1,0) coordinate (P1);
        \path (3,0) coordinate (P2);
        \path (4,0) coordinate (P3);
        \path (0,-1.5) coordinate (Q0);
        \path (1,-1.5) coordinate (Q1);
        \path (3,-1.5) coordinate (Q2);
        \path (4,-1.5) coordinate (Q3);
        \path (5,-0.75) coordinate (R);

        \foreach \x in {0,...,3} {
	\fill (P\x) circle (2pt);
	\fill (Q\x) circle (2pt);}
	\fill (R) circle (2pt);
        \foreach \x in {0,...,3} {
             \draw[dashed] (P\x) -- (Q\x) ;}

\path   (2,0) node {$\ldots$};
\path   (2,-1.5) node {$\ldots$};

             \draw (P0) -- (Q1) ;
             \draw (P0) -- (Q2) ;
             \draw (P0) -- (Q3) ;
             \draw (P1) -- (Q2) ;
             \draw (P1) -- (Q3) ;
             \draw (P2) -- (Q3) ;
             \draw (P1) -- (P0) ;
             \draw (P1) -- (P0) ;
             \draw (P3) -- (P2) ;
             \draw (P0) -- (R) ;
             \draw (P1) -- (R) ;
             \draw (P2) -- (R) ;
             \draw (P3) -- (R) ;
        
             \draw (P3) to[out=115,in=65] (P0) ;
              \draw (P1) to[out=30,in=150] (P2) ;
             \draw (P3) to[out=130,in=50] (P1) ;
             \draw (P2) to[out=130,in=50] (P0) ;
\end{tikzpicture}
\end{minipage}
\vspace{-0.1cm}
\caption{Graphs with maximal total irregularity $H_n$ (with dashed edges) and $\overline{H}_n$
               (without dashed edges) for even and odd $n$, respectively.\label{max-total-irr}} \label{f:H_n}
\end{figure}
\vspace{-0.2cm}%
\begin{co}[\cite{Dimit-Abdo2}] \label{thm-graphs-max-irr_t}
For a graph $G$ with $n$ vertices, it holds that 
\beq 
\irr_t(G) \leq 
    \begin{cases}  
	   \frac{1}{12}(2n^3 - 3n^2 - 2n)    &\mbox{n even,}\\
                 \\
	    \frac{1}{12}(2n^3 - 3n^2 - 2n +3)  &\mbox{n odd.} \nonumber
	\end{cases}
\eeq
Moreover, the bounds are sharp.
\end{co}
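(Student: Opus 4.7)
The corollary has two ingredients: showing that the graphs $H_n$ and $\overline{H}_n$ in Figure~\ref{f:H_n} attain the claimed values, and showing that no other graph on $n$ vertices exceeds them. I would organise the proof accordingly.

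For attainment, first rewrite $\irr_t$ in terms of the non-increasing degree sequence $d_1\ge d_2\ge\cdots\ge d_n$:
\[
\irr_t(G)=\sum_{1\le i<j\le n}(d_i-d_j)=\sum_{i=1}^{n}(n-2i+1)\,d_i.
\]
For even $n=2k$, the graph $H_n$ is a clique on $P_0,\dots,P_{k-1}$ joined to an independent set $Q_0,\dots,Q_{k-1}$ by the edges $P_iQ_j$ with $i\le j$, so its degrees, sorted, are $n-1,n-2,\dots,k+1,k,k,k-1,\dots,2,1$. Substituting into the formula and pairing the $i\le k$ terms with their $i>k$ counterparts collapses the sum to $\sum_{i=1}^{k}(n-2i+1)(n-2i)$, which evaluates to $\tfrac{1}{12}(2n^3-3n^2-2n)$ using the standard closed forms for $\sum i$ and $\sum i^2$. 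The odd case is handled analogously by reading $\overline{H}_n$ as $H_{n-1}$ with an additional apex of degree $n-1$; its contribution to $\irr_t$ accounts for the extra $+3$ in the numerator.

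The upper bound is the more substantive part. A convenient reformulation is the layer-cake identity
\[
\irr_t(G)=\sum_{t=1}^{n-1}N_t(n-N_t),\qquad N_t:=\bigl|\{v\in V(G):d_G(v)\ge t\}\bigr|,
\]
obtained from $|d_u-d_v|=\sum_{s\ge 1}\mathbf{1}[\min(d_u,d_v)<s\le\max(d_u,d_v)]$. The sequence $(N_t)$ is nonincreasing with $N_1\le n$ and $N_{n-1}\ge 0$, and each summand is a concave parabola in $N_t$ peaking at $N_t=n/2$. For $H_n$ one checks that $(N_1,\dots,N_{n-1})=(n,n-1,\dots,k+1,k-1,\dots,1)$ (the middle value $k$ is skipped), and summing over $t$ recovers the target. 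It remains to show that this configuration is optimal over all graphical $(N_t)$.

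The main obstacle is translating graphicity --- the Erd\H{o}s--Gallai conditions on the degree sequence, equivalently on $(N_t)$ --- into a direct comparison with the extremal pattern, since the naive concavity bound $N_t(n-N_t)\le\lfloor n^2/4\rfloor$ summed over $t$ overshoots by a factor of roughly $3/2$. The most direct remedy is a local-exchange argument: from any hypothetical extremal $G$ whose degree sequence deviates from that of $H_n$ (resp.\ $\overline{H}_n$), locate two edges whose simultaneous swap keeps $G$ simple and strictly increases $\sum_t N_t(n-N_t)$, contradicting maximality; alternatively, one can induct on $n$ by stripping a minimum-degree and a maximum-degree vertex in tandem. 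The parity split between the even and odd bounds should fall out naturally at the end, reflecting whether a single vertex of degree $\lfloor n/2\rfloor$ can be placed as the unique ``middle'' of the degree spread.
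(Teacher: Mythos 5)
First, a remark on context: this corollary is imported from \cite{Dimit-Abdo2} and is not proved in the present paper, so your attempt can only be judged on its own merits. The two identities you set up are correct and useful: $\irr_t(G)=\sum_{i}(n-2i+1)d_i$ for a sorted degree sequence, and the layer-cake form $\irr_t(G)=\sum_{t=1}^{n-1}N_t(n-N_t)$. Your computation for even $n$ is also right: $H_n$ has sorted degree sequence $(n-1,\dots,k+1,k,k,k-1,\dots,1)$ and the pairing does collapse the sum to $\sum_{i=1}^{k}(n-2i+1)(n-2i)=\frac{1}{12}(2n^3-3n^2-2n)$. However, your odd case is wrong as stated. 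Adding a universal apex to $H_{n-1}$ shifts every old degree up by one (which leaves all pairwise differences unchanged) and contributes only $\sum_{v}(n-2-d_{H_{n-1}}(v))=(n-1)(n-2)-2|E(H_{n-1})|$ to $\irr_t$. For $n=5$ this gives $6+4=10$, whereas the claimed bound is $\frac{1}{12}(250-75-10+3)=14$, attained e.g.\ by the degree sequence $(4,3,2,2,1)$. The actual $\overline{H}_n$ of Figure~\ref{f:H_n} is not $H_{n-1}$ plus an apex (the extra vertex $R$ there has degree $(n-1)/2$, and the matching edges $P_iQ_i$ are deleted), so the attainment half of your proof fails for odd $n$ and needs to be redone with the correct extremal degree sequence $(n-1,\dots,k+1,k,k,k-1,\dots,1)$ for $n=2k+1$.

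The more serious gap is the optimality step, which you correctly identify as the substantive part but do not carry out, and the specific mechanism you propose cannot work as described. A simultaneous swap of two edges $\{ab,cd\}\mapsto\{ac,bd\}$ preserves every vertex degree, hence preserves $(N_t)_t$ and $\irr_t$ exactly; no such move can strictly increase the objective, so it cannot be used to push an arbitrary extremal graph toward $H_n$. What is needed is a move that changes the degree sequence --- an edge rotation (delete $uv$, insert $uw$), or, as this paper does for the minimization problem in Lemmas~\ref{pro-DS-caridnality} and \ref{pro-DS-caridnality-2}, a direct perturbation of the degree sequence followed by a verification via Theorems~\ref{te-erdos-gallai} and \ref{te-tripathi-vijay} that the improved sequence is still graphical. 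Your alternative suggestion (induction by stripping a maximum- and a minimum-degree vertex) is closer to a workable route, but it too requires an argument that an extremal graph contains a vertex of degree $n-1$ and a vertex of degree $1$ (for even $n$), which is precisely the nontrivial structural fact; without it the induction has no base to stand on. As submitted, the upper bound is a program rather than a proof.
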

\vspace{-0.2cm}
In \cite{ljz-mtiug-2014, ljz-mtibg-2014}, the unicyclic and bicyclic graphs, respectively, with maximal total irregularity were determined.

The lower bound on the total irregularity of general graphs is trivial, since
it is obvious that  the total irregularity of a graph is zero if and only if the graph is regular.
Also by definition the total irregularity is nonnegative.
However it is not trivial to determine the lower bounds on the total irregularity
of special classes of graphs and the total irregularity
of non-regular graphs.
In~\cite{zyy-mtig-2014},  Zhu, You and Yang investigated 
the minimal total irregularity of the connected graphs, determined the minimal, the
second minimal, the third minimal total irregularity of trees, unicyclic graphs, bicyclic
graphs. 
They also proposed the following conjecture.

\begin{conjecture}[\cite{zyy-mtig-2014}] \label{conj-graphs-min-irr_t}
Let $G$ be a simple connected graph with $n$ vertices. If $G$ is a non-regular
graph, then $irr_t(G) \geq 2n-4$.
\end{conjecture}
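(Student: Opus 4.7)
The plan is to reduce the problem to degree sequences with only two distinct values, exploit the handshake-lemma parity, and then realize the extremal configurations by explicit connected constructions. I expect the conjectured bound to be correct for even $n$ but too pessimistic for odd $n$.

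First I would rewrite the total irregularity in terms of the degree multiset. If the distinct degrees of $G$ are $a_1<a_2<\cdots<a_r$ with multiplicities $m_1,\ldots,m_r$ and $s_\ell = m_1+\cdots+m_\ell$, then using $a_j-a_i \ge j-i$ and regrouping,
\[
\irr_t(G) \;=\; \sum_{i<j} m_i m_j (a_j - a_i) \;\ge\; \sum_{i<j} m_i m_j (j-i) \;=\; \sum_{\ell=1}^{r-1} s_\ell(n-s_\ell).
\]
Since $1 \le s_\ell \le n-1$, each summand is at least $n-1$, so $r \ge 3$ forces $\irr_t(G) \ge 2(n-1)$, already exceeding the conjectured $2n-4$. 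Hence extremal graphs have exactly two distinct degrees: $k$ vertices of degree $a$ and $n-k$ vertices of degree $a+\delta$ with $\delta \ge 1$, giving $\irr_t(G) = k(n-k)\delta$.

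Second, I would minimize $k(n-k)\delta$ subject to the handshake parity $na + (n-k)\delta \equiv 0 \pmod{2}$. On $\{1,\ldots,n-1\}$, $k(n-k)$ attains its minimum $n-1$ at $k \in \{1,n-1\}$, and $\delta \ge 1$. For $k=1$, $\delta=1$ the degree sum is $na+1$, which is even only when both $n$ and $a$ are odd. Thus for even $n$ this configuration is forbidden, and the next feasible case $k=2$, $\delta=1$ yields $\irr_t=2(n-2)=2n-4$, parity-compatible for every $a$. For odd $n$, on the other hand, the case $k=1$, $\delta=1$ is allowed whenever $a$ is odd and gives the strictly smaller value $n-1$; this is exactly where I expect Conjecture~\ref{conj-graphs-min-irr_t} to fail.

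The final task is to realize these values by connected graphs. For even $n$ I would look for a connected graph with degree sequence $((a+1)^2,a^{n-2})$, most directly obtained from a connected $(a+1)$-regular graph on $n$ vertices by deleting one edge whose endpoints stay joined through the rest of the graph. For odd $n$ the target is $((a+1)^1,a^{n-1})$ with $a$ odd, realised, for example, by attaching a universal vertex to an $(a-1)$-regular graph on $n-1$ vertices (the five-vertex wheel with degree sequence $(4,3,3,3,3)$ and $\irr_t=4=n-1$ is the $a=n-2=3$ case). I expect the principal obstacle to lie here: one must establish graphicality together with connectivity for every admissible pair $(n,a)$, especially in boundary regimes (small $n$, or $a$ close to $n-1$ or close to $1$) where the Erd\H{o}s--Gallai conditions are tight and no single uniform construction suffices. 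Once such realizations are in hand, combined with the lower bound above they characterise the extremal graphs and yield the corrected sharp lower bounds: $2n-4$ for even $n$, and $n-1$ for odd $n$.
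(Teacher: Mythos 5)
Your proposal is correct in substance and reaches the same conclusion as the paper ($2n-4$ is the sharp bound only for even $n$; odd $n$ admits $n-1$), but your reduction to two distinct degrees is a genuinely different and more economical argument than the paper's. The paper's Lemma~\ref{pro-DS-caridnality} proves that a graph with $|D(G)|\geq 3$ cannot be minimal by explicitly transforming its degree sequence into a two-valued sequence, verifying via Erd\H{o}s--Gallai and Tripathi--Vijay that the new sequence is graphical, and checking through several cases that the total irregularity strictly decreases. Your inequality
\[
\irr_t(G)=\sum_{i<j} m_i m_j (a_j-a_i)\;\geq\;\sum_{\ell=1}^{r-1} s_\ell(n-s_\ell)\;\geq\;(r-1)(n-1)
\]
bypasses all of that: it gives a purely numerical lower bound of $2n-2$ whenever there are at least three distinct degrees, with no need to exhibit any intermediate graph or check graphicality except for the final extremal realizations. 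This is cleaner and arguably strengthens the paper's presentation, since it also immediately delivers the values needed for Theorem~\ref{te-graphs-min-irr_t-20} (the second and third smallest values $2n-4$ and $2n-2$). The remaining steps --- minimizing $k(n-k)\delta$ under the handshake parity constraint and realizing the extremal degree sequences by connected graphs --- track the paper's Lemma~\ref{pro-DS-caridnality-2} and Theorem~\ref{te-graphs-min-irr_t-10} closely.

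Two small bookkeeping slips, neither fatal. First, with your convention ($k$ vertices of degree $a$, $n-k$ of degree $a+1$), the degree sum for $k=1$ is $na+n-1$, which is even iff $n$ is odd and $a$ is \emph{even}; the sum $na+1$ you wrote corresponds to $k=n-1$, which requires $a$ odd. Since both cases fail for even $n$ and at least one succeeds for odd $n$, your conclusion is unaffected, but the parity bookkeeping should be stated for both $k=1$ and $k=n-1$ (the paper's two sequence types in (\ref{exp-30}) correspond exactly to these two cases). Second, attaching a universal vertex to an $(a-1)$-regular graph on $n-1$ vertices produces the sequence $((a+1)^1,a^{n-1})$ only when $a=n-2$; that single realization (equivalently, $K_n$ minus a near-perfect matching, as in the paper) already suffices to show the bound $n-1$ is attained for every odd $n$, but it does not by itself characterize all extremal graphs, for which the Erd\H{o}s--Gallai check over all admissible $a$ (as in Lemma~\ref{pro-DS-caridnality-2}) is still needed.
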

\noindent
In the next section, we characterize the non-regular graphs with minimal total irregularity and
thereby resolve the above conjecture.
We show that Conjecture~\ref{conj-graphs-min-irr_t} is true only for non-regular connected graphs of even order,
while the actual sharp lower bound of $n-1$ is achieved by graphs of odd order.

By $D(G)$ we denote the set of the vertex degrees of a graph $G$, i.e., $D(G)=\{d(v)\,|\,v\in V\}$.
Given an undirected graph, a {\it degree sequence} is a monotonic non-increasing 
sequence of the degrees of its vertices.
A {\it graphical sequence} is a sequence of numbers which can be the degree sequence of some graph.
In general, several graphs may have the same graphical sequence.
In order to show that a given sequence of non-negative integers  is graphical,
one may use the following characterization by Erd\H{o}s and Gallai.
\begin{te}[\cite{eg-ggdv-60}]\label{te-erdos-gallai}
A sequence $d_1 \geq d_2 \geq \dots \geq d_n$ of non-negative integers
with even sum is graphical if and only if
\beq \label{eq-thm-50}
\sum\limits_{i=1}^{r} d_i \leq r(r-1) + \sum\limits_{i=r+1}^{n} \min(r,d_i),
\eeq 
for all $1 \leq r \leq n-1$.
\end{te}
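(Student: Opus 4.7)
The statement is an ``iff'', so I would treat the two directions separately.

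For necessity, a direct double-counting argument suffices. Suppose $G$ realizes the sequence, with $d_G(v_i) = d_i$, and fix $r \in \{1,\dots,n-1\}$. Split $\sum_{i=1}^r d_i$ according to whether an edge incident to some $v_i$ ($i \le r$) stays inside $A = \{v_1,\dots,v_r\}$ or leaves it. Edges inside $A$ are counted twice, contributing at most $2\binom{r}{2} = r(r-1)$. For each $j > r$, the vertex $v_j$ contributes to the sum exactly the number of its neighbours in $A$, which is bounded both by $r$ (size of $A$) and by $d_j$, hence by $\min(r,d_j)$. Summing over $j > r$ gives exactly (\ref{eq-thm-50}).

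For sufficiency, my plan is induction on $S = \sum d_i$. The base case $S = 0$ is realized by the empty graph. For the inductive step the goal is to drop $S$ by $2$ while preserving (\ref{eq-thm-50}), realize the reduced sequence by induction, and then add one edge back. Concretely, I would pick $i = 1$ (a vertex of largest degree) and $j$ to be the largest index with $d_j > 0$; decrement $d_i$ and $d_j$ by one, re-sort to get $d'$, and realize $d'$ as some graph $G'$ by induction. Adding the edge between the vertices of $G'$ corresponding to $v_i$ and $v_j$ produces a realization of $d$, provided the edge was not already there; if it is, one invokes the standard $2$-switch argument (swap $v_i v_k, v_j v_\ell$ to $v_i v_\ell, v_j v_k$ for suitable $k,\ell$) to modify $G'$ into another realization in which $v_i v_j$ is absent.

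The main obstacle — and the technical heart — is showing that the decrement-and-resort operation preserves the Erdős--Gallai inequality (\ref{eq-thm-50}) for every $r$. The argument splits naturally into cases according to the position of $r$ relative to the indices whose degrees changed after re-sorting: for very small $r$ the left-hand side drops by at least as much as the right-hand side, while for large $r$ slack from the even-sum parity and from the $\min(r,d_j)$ term absorbs the change. When (\ref{eq-thm-50}) is tight at some critical $r$ one has to be careful in choosing $i,j$, typically picking $i$ at a ``drop'' in the sequence ($d_i > d_{i+1}$) so that re-sorting is trivial; this is the point where a direct induction almost works but requires Choudum-style bookkeeping to force through. Once this preservation lemma is established, the rest of the proof is routine.
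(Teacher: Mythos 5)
The paper offers no proof of this statement: it is the classical Erd\H{o}s--Gallai theorem, imported from \cite{eg-ggdv-60} as a black box and used only to certify that certain explicit degree sequences are graphical. So there is no in-paper argument to compare yours against; the question is whether your proposal stands on its own.

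Your necessity direction is complete and correct: splitting $\sum_{i=1}^{r} d_i$ into incidences with edges inside $A=\{v_1,\dots,v_r\}$ (each counted twice, hence contributing at most $r(r-1)$) and edges leaving $A$ (at most $\min(r,d_j)$ of them at each $v_j$ with $j>r$) is exactly the standard double count. The sufficiency direction, however, contains a genuine gap: the entire difficulty of the theorem is the claim that decrementing $d_i$ and $d_j$ and re-sorting preserves (\ref{eq-thm-50}) for every $r$, and you do not prove it --- you describe where the cases would fall and defer to ``Choudum-style bookkeeping.'' That verification is not routine; it is where the hypothesis is actually consumed (one must show that whenever (\ref{eq-thm-50}) is tight at some critical $r$, the left-hand side is forced to drop by at least as much as the right-hand side, which requires a careful choice of which maximal entry to decrement and a parity or slack argument at the tight index), and it occupies the bulk of the published inductive proofs of this theorem. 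A secondary gap sits in the final step: after realizing the reduced sequence by $G'$ you must exhibit a realization in which $v_i$ and $v_j$ are non-adjacent before adding the edge back, and your appeal to a $2$-switch presupposes the existence of an edge $v_kv_\ell$ with $v_iv_k\notin E(G')$ and $v_jv_\ell\notin E(G')$, which does not follow without an argument. As written, the proposal establishes only the easy half of the equivalence.
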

Tripathi and Vijay \cite{tv-nteg-03} showed that the inequality (\ref{eq-thm-50}) need be checked 
only for as many $r$ as there are distinct terms in the sequence, not for all $1 \leq r \leq n-1$.
Denote the indices $1 \leq i \leq n-1$, with $d_i > d_{i+1}$, by
$\sigma_1, \sigma_2, \dots, \sigma_l$, and define $\sigma_l=n$.
\begin{te}[ \cite{tv-nteg-03}]\label{te-tripathi-vijay}
In Theorem~\ref{te-erdos-gallai} it suffices to check the inequalities (\ref{eq-thm-50})
for $r= \sigma_1, \sigma_2, \dots, \sigma_l$.
\end{te}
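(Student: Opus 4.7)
The plan is to introduce the slack of the Erd\H{o}s--Gallai inequality,
\[
\Delta(r) \;=\; r(r-1) \;+\; \sum_{i=r+1}^{n}\min(r,d_i) \;-\; \sum_{i=1}^{r} d_i,
\]
so that, for an even-sum sequence, Theorem~\ref{te-erdos-gallai} is exactly the statement $\Delta(r)\ge 0$ for every $r\in\{1,\dots,n-1\}$. I would argue by contrapositive: assuming $\Delta(r_0)<0$ for some $r_0$, I exhibit a $\sigma_j$ with $\Delta(\sigma_j)<0$. The central tool is the forward-difference identity
\[
\Delta(r+1)-\Delta(r) \;=\; 2r \,-\, d_{r+1} \,-\, \min(r,d_{r+1}) \,+\, \bigl|\{\,i\ge r+2:\;d_i\ge r+1\,\}\bigr|,
\]
obtained from $\min(r+1,d)-\min(r,d)=\mathbf{1}[d\ge r+1]$.

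I would then locate the maximal constant-degree block $[\sigma_{j-1}+1,\sigma_j]$ (with $\sigma_0:=0$) containing $r_0$, in which $d_i\equiv c$, and specialize the forward-difference to indices $r,r+1$ both inside the block. This produces two regimes: if $c\le r$ (Case~A), no $i\ge r+2$ has $d_i\ge r+1$, so the increment simplifies to $2(r-c)\ge 0$ and $\Delta$ is non-decreasing on the Case-A portion of the block; if $c>r$ (Case~B), the increment equals $(\sigma_j-c-1)+B_r$ with $B_r:=|\{i>\sigma_j:d_i\ge r+1\}|$ weakly decreasing in $r$. A direct computation of $\Delta(\sigma_j)-\Delta(r)$ as a sum, splitting the contribution of the tail indices $i>\sigma_j$ according to whether $d_i\le r$, $r<d_i\le\sigma_j$, or $d_i>\sigma_j$, then shows that the minimum of $\Delta$ restricted to the block is attained either at the right endpoint $\sigma_j$ or at the left endpoint $\sigma_{j-1}+1$.

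If this minimum is at $\sigma_j$ itself, then $\Delta(\sigma_j)\le\Delta(r_0)<0$ and we are done. Otherwise, I apply the forward-difference identity once more at the strict-decrease point $r=\sigma_{j-1}$ (where $d_r>c=d_{r+1}$), and splitting on whether $c\le\sigma_{j-1}$ or $c>\sigma_{j-1}$, one verifies $\Delta(\sigma_{j-1})\le\Delta(\sigma_{j-1}+1)$ except in degenerate sub-cases (e.g.\ a block of length one with $c=\sigma_{j-1}+1$) which reduce back to the already-handled $\sigma_j$-endpoint branch. Iterating this backward propagation through the blocks eventually reaches the first block $j=1$; there a separate direct analysis handling the artificial $\sigma_0=0$ together with the explicit first-block formula exhibits some $\sigma$-index with $\Delta<0$, completing the contrapositive.

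The main obstacle is the book-keeping in Case~B when $c>\sigma_j$: the single-step increments $(\sigma_j-c-1)+B_r$ can be strictly negative, so the endpoint-minimality of $\Delta$ on the block cannot be read off term-by-term and instead requires telescoping over the entire Case-B portion, bounding the accumulated tail count $\sum_r B_r$ via a double-counting in the indices $i>\sigma_j$ and using the constraint $d_i<c$ to cap each tail index's contribution.
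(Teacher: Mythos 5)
The paper does not prove this statement at all: it is quoted directly from Tripathi and Vijay \cite{tv-nteg-03}, so there is no in-paper argument to measure yours against. Judged on its own, your plan is essentially the standard block-wise analysis of the Erd\H{o}s--Gallai slack, and your two key computations are correct: the forward difference
\[
\Delta(r+1)-\Delta(r)\;=\;2r-d_{r+1}-\min(r,d_{r+1})+\bigl|\{\,i\ge r+2:\ d_i\ge r+1\,\}\bigr|
\]
and its two specializations inside a constant block of common value $c$ ending at $\sigma_j$, namely $2(r-c)\ge 0$ when $c\le r$ and $(\sigma_j-c-1)+B_r$ when $c>r$. So the skeleton is sound, but it remains a sketch, and two of your worries are misplaced.

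First, the ``main obstacle'' you flag in Case B dissolves: if $c<\sigma_j$ then $\sigma_j-c-1\ge 0$, so every Case-B increment is already nonnegative and (together with Case A) $\Delta$ is nondecreasing across the whole block, with its minimum at the left endpoint; if instead $c\ge\sigma_j$ then every relevant $r$ in the block satisfies $r\le\sigma_j-1<c$, so all increments are of the form $(\sigma_j-c-1)+B_r$ and are weakly decreasing in $r$, whence $\Delta$ is concave on the block and its minimum sits at an endpoint. No telescoping of $\sum_r B_r$ or double counting is needed. Second, the ``backward propagation through the blocks'' is superfluous: for $j\ge 2$ the left neighbour $\sigma_{j-1}$ is itself one of the checked indices, and the increment out of $\sigma_{j-1}$ obeys the same two-case formula (it depends only on $d_{\sigma_{j-1}+1}=c$ and the tail), so a single step already yields $\Delta(r_0)\ge\min\bigl(\Delta(\sigma_{j-1}),\Delta(\sigma_j)\bigr)$; for $j=1$ one uses $\Delta(0)=0$, which forces any negative value to propagate rightward to $\sigma_1$. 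With these simplifications your contrapositive closes cleanly; as written, the proposal is a correct strategy whose hardest-looking step is in fact the easiest, but it is not yet a complete proof.
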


\section[Results]{Results}

The only non-regular connected graph of order at most three is the path with three vertices,
whose total irregularity is $2n-4=2$. Therefore, in the sequel we consider connected graphs of order at least four.
First, we present two results that will be used to obtain the main results later in this section.

\begin{lemma} \label{pro-DS-caridnality}
Let $G$ be a connected graph of order $n >3$ with $|D(G)| \geq 3$.
Then, there exists a connected graph $H$ of same order as $G$ with $|D(H)| = 2$, 
such that each degree in $H$ occurs at least two times  and ${\irr_t} (H) < {\irr_t} (G)$.
\end{lemma}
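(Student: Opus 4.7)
The plan is a short two-step argument. First, I will show that any connected $G$ with $|D(G)|\ge 3$ satisfies $\irr_t(G)\ge 2n-2$. Then the path $P_n$ will serve as the required $H$: for $n\ge 4$ it is connected with degree set $\{1,2\}$, the two multiplicities being $2$ and $n-2$ (both at least $2$), and $\irr_t(P_n)=2(n-2)=2n-4<2n-2\le \irr_t(G)$.

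For the lower bound $\irr_t(G)\ge 2n-2$, I would partition $V(G)$ into three nonempty classes
$$X=\{v:d_G(v)=\Delta\},\qquad Y=\{v:d_G(v)=\delta\},\qquad Z=V(G)\setminus(X\cup Y),$$
where $\Delta$ and $\delta$ are the maximum and minimum degrees of $G$. All three classes are nonempty because $|D(G)|\ge 3$, and that same hypothesis forces $\Delta-\delta\ge 2$. Counting only the cross-class pairs — each pair in $X\times Y$ contributes at least $\Delta-\delta\ge 2$, each pair in $(X\cup Y)\times Z$ contributes at least $1$, and pairs inside $Z$ are simply discarded — I obtain
$$\irr_t(G)\ \ge\ 2\,|X|\,|Y|\ +\ \bigl(|X|+|Y|\bigr)\,|Z|.$$
Writing $s=|X|+|Y|\in\{2,\dots,n-1\}$ and using that $|X||Y|\ge s-1$ whenever $|X|,|Y|\ge 1$, the right-hand side is at least the downward-opening parabola $-s^2+(n+2)s-2$ on the integer interval $[2,n-1]$. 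Its minimum on that interval is attained at an endpoint; the endpoint values are $2n-2$ at $s=2$ and $3n-5$ at $s=n-1$, so the smaller one, $2n-2$, is the desired bound.

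The only subtle step is identifying and exploiting the three-way partition: the hypothesis $|D(G)|\ge 3$ is what gives simultaneously the factor $2$ on $X\times Y$ pairs and a nonempty $Z$ (keeping $s\le n-1$), and removing either of these would leave a bound of order at most $n-1$, insufficient to beat the target $2n-4$. Once that is set up, the parabola endpoint calculation and the verification for $H=P_n$ are immediate, so I would expect the write-up to be short.
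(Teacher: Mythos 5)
Your proof is correct, and it takes a genuinely different route from the paper. The paper argues by surgery on the degree sequence: it collapses all degrees of $G$ onto $\{d_3, d_3+1\}$ (the third-largest degree and its successor), tracks the resulting change in $\sum_{u,v}|d(u)-d(v)|$ through two main cases plus parity-repair subcases, and then invokes the Erd\H{o}s--Gallai criterion (in the Tripathi--Vijay form) to certify that the modified sequence is graphical. You avoid all of that: the three-block partition by degree gives the clean estimate
$\irr_t(G)\ge 2|X||Y|+(|X|+|Y|)|Z|\ge -s^2+(n+2)s-2\ge 2n-2$
for every connected $G$ with $|D(G)|\ge 3$ (each step checks out: $|X||Y|\ge s-1$ since $(|X|-1)(|Y|-1)\ge 0$, the parabola is concave so its minimum on $[2,n-1]$ is the endpoint value $2n-2$), and the explicit witness $H=P_n$ has $\irr_t(P_n)=2n-4$ with degree multiplicities $2$ and $n-2\ge 2$. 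This is shorter, needs no graphicality verification, and yields the quantitative statement $\irr_t(G)\ge 2n-2$ whenever $|D(G)|\ge 3$, which would in fact also simplify the later discussion of the second and third smallest values. What the paper's construction buys instead is an $H$ whose degrees stay near those of $G$; but that extra feature is neither claimed in the lemma nor used afterwards, so nothing is lost by your choice of a fixed witness.
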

\begin{proof}
Assume that the claim of the proposition is false, i.e., $D(G)=\{d_1, d_2, \dots, d_k\}$, $3 \leq k \leq n-1$.
Also, we assume that  $d_1 > d_2 > \dots >d_k$.
Let $D_G=(d_G(v_1), d_G(v_2), \dots, d_G(v_n))$ be the degree sequence of $G$ and
$p$ be the smallest index such that $d_G(v_p)=d_3$.
We apply a set of transformations to $D_G$ obtaining a sequence $D_H=(d_H(v_1), d_H(v_2), \dots,$ $ d_H(v_n))$ and consider the  difference
\beq \label{diff_G-H}
\sum_{u, v \in V(G)} \left(|d_G(u)-d_G(v)|   -  |d_H(u)-d_H(v)|\right).
\eeq
We distinguish two cases regarding $p$.

\smallskip
\noindent
{\bf Case $1$.}  $p<n.$

\noindent
We apply the following assignments to $D_G$:
$$d_H(v_i):=d_3+1,  \, i=1, \dots p-1,  \quad \text{and}   \quad d_H(v_i):=d_3,  \, i=p+1, \dots n.$$ 

\noindent
After these assignments, for all pairs of vertices $v_i$ and $v_j$, with $d_G(v_i) \neq d_G(v_j) \neq d_G(v_p)$, it holds that
\beq \label{exp-00}
 |d_G(v_i)-d_G(v_j)| -  |d_H(v_i)-d_H(v_j)| \geq 1.  \nonumber
\eeq
For the rest of the pairs of vertices $v_i$ and $v_j$, it holds that
\beq \label{exp-001}
 |d_G(v_j)-d_G(v_j)| -  |d_H(v_i)-d_H(v_j)| \geq 0.  \nonumber
\eeq
Thus, it follows that the difference (\ref{diff_G-H}) is positive.

\noindent
If $n-p+1$ and $d_3$ are odd, the sum of the elements of $D_H$ is also odd, and $D_H$ cannot be graphical sequence,
since the parity condition of Theorem~\ref{te-erdos-gallai} is not satisfied. 
In this case we apply additional assignments such that $d_3$ occurs even times in the sequence $D_H$.
We distinguish two cases:
\begin{itemize}
\item If $p-3 < n-p$, then  $d_H(v_{p-1}):=d_3$. 

\noindent
In this case,   $p-2$ summands  in $\sum_{d_H(u), d_H(v) \in D_H} |d_H(u)-d_H(v)|$ increase by one,
$n-p+1$ summands decrease by one, and the rest remain unchanged. Thus, the total change 
$p-2- (n-p+1)$ is negative.

\item If $p-3 \geq n-p$, then  $d_H(v_{p}):=d_3+1$. 

\noindent
After this assignment  $p-1$ summands  in $\sum_{d_H(u), d_H(v) \in D_H} |d_H(u)-d_H(v)|$ decrease by one,
$n-p$ summands increase by one, and the rest remain unchanged. 
Here also, the total change $-(p-1)+n-p$ is negative.
\end{itemize}

\smallskip
\noindent
{\bf Case $2$.}  $p=n.$

\noindent
In this case $D_G$ is comprised of three degrees and
the degree $d_3$ occurs once. Assume that $d_1$ occurs $x$ times, $n-2 \geq x \geq 1$. Then,  $d_2$ occurs $n-x-1$ times. We, perform the following assignments:
$$
d_H(v_{p-1}):=d_3 \quad \text{and}   \quad d_H(v_{i}):=d_3+1, i=1, \dots p-2. 
$$

\noindent
For every $v \in V(G)$ consider the pair $(d_G(v), d_H(v))$.
After the above transformation there are  $x$ pairs $(d_1, d_3+1)$,
$n-x-2$ pairs $(d_2, d_3+1)$, one pair $(d_2, d_3)$ and one pair $(d_3, d_3)$.
It holds that
\begin{align}
\hspace{10mm} &\sum_{ \substack{  i=1, \dots, x \\j=x+1, \dots, n-2} } \left(|d_G(v_i)-d_G(v_j)|   -  |d_H(v_i)-d_H(v_j)|\right)= x(n-x-2)(d_1-d_2 ), \nonumber
\end{align}
\begin{align}
\hspace{7mm} &\sum_{ i=1, \dots, x } \left(|d_G(v_i)-d_G(v_{n-1})|   -  |d_H(v_i)-d_H(v_{n-1})|\right)= x( d_1 - d_2-1),  \nonumber
\end{align}

\begin{align}
 &\sum_{ i=1, \dots, x } \left(|d_G(v_i)-d_G(v_{n})|   -  |d_H(v_i)-d_H(v_{n})|\right)= x( d_1 - d_3 -1),  \nonumber
\end{align}
\begin{align}
\hspace{6mm} &\sum_{ i=x+1, \dots, n-2 } \left(|d_G(v_i)-d_G(v_{n-1})|   -  |d_H(v_i)-d_H(v_{n-1})|\right)=  -(n-x-2),  \nonumber
\end{align}
\begin{align}
\hspace{17mm} &\sum_{ i=x+1, \dots, n-2 } \left(|d_G(v_i)-d_G(v_{n})|   -  |d_H(v_i)-d_H(v_{n})|\right)= (n-x-2)(d_2-d_3-1)),  \nonumber
\end{align}
and
\begin{align}
\hspace{-5mm} &|d_G(v_{n-1})-d_G(v_{n})|   -  |d_H(v_{n-1})-d_H(v_{n})|=d_2-d_3.  \nonumber
\end{align}

\noindent
Thus, the difference  (\ref{diff_G-H}) is
\beq \label{exp-070}
  && x(n-x-2)(d_1-d_2) + x(d_1-d_2-1) +  x(d_1-d_3-1) -  (n-x-2) +   \nonumber \\
  && (n-x-2)(d_2-d_3-1) +d_2-d_3. 
\eeq
Since $ d_1- d_2\geq 1$, $d_1-d_2-1 \geq 0$, $d_1-d_3-1 \geq 1$,  $d_2-d_3-1 \geq 0$, $d_2-d_3 \geq 1$ and $x \leq n-2$,
the lower bound on (\ref{exp-070}) is 
\beq \label{exp-075}
&& x(n-x-2)  +  x -  (n-x-2)  +1=(x-1)(n-x-2)  + x +1.
\eeq
From $x \geq 1$, it follows that (\ref{exp-075}), and therefore (\ref{exp-070}) and
(\ref{diff_G-H}) are positive.

\noindent
Observe that the sequence $D_H$, in both Cases $1$ and $2$ is comprised of degrees $d_3+1$ and $d_3$, 
where $d_3$  occurs even times and $d_3 \leq n-3$. 
Also, note that $D_H$ does not  necessarily satisfy the parity condition of Theorem~\ref{te-erdos-gallai}.
For example, this is a case precisely when $n$ is odd and $d_3$ is even.  
In this case, we apply the following assignment:
$$
d_H(v_i):=d_H(v_i)+1,\quad i=1, \dots n.
$$
Thus, now we have a degree sequence 
$D_H$ with $2 \leq y \leq n-2$ occurrences of degree $d+1$,
and $n-y$ occurrences of degree $d$, where $d$ is either $d_3$ or  $d_3+1$.

Next we show that $D_H$ is a degree sequence of a graph $H$, by showing that $D_H$ satisfies the condition (\ref{eq-thm-50}).
By Theorem~\ref{te-tripathi-vijay} it suffices to show that (\ref{eq-thm-50}) is satisfied for $r=y$.
With respect to $y$, we consider two cases.

\begin{itemize}
\item $y \leq d+1$. Then, for $r=x$, (\ref{eq-thm-50}) can be written as
\beq \label{exp-0610}
y(d +1) \leq y(y-1) +y(n-y),  \quad \text{or} \quad y(d +1) \leq y(n-1), \nonumber
\eeq
which holds since $d \leq d_3+2$ and $d_3 \leq n-3$.
\item $y > d+1$. In this case, (\ref{eq-thm-50}) can be written as
\beq \label{exp-0610}
y(d +1) \leq y(y-1) +(n-y)(d+1),  \quad \text{or} \quad 0 \leq y(y-d-2) +(n-y)(d+1), \nonumber
\eeq
and it is satisfied because $d +2 \leq y$ and $y < n$.
\end{itemize}

\noindent
Thus, we have shown that $D_H$ is a graphical sequence of a graph $H$, and
$$
{\irr_t} (G) = \frac{1}{2}\sum_{u, v \in V(G)} |d_G(u)-d_G(v)| \quad  >  \quad  \frac{1}{2} \sum_{u, v \in V(H)} |d_H(u)-d_H(v)| = {\irr_t} (H),
$$
which is a contradiction to the initial assumption that $G$ is a non-regular graph with minimal total irregularity.
\end{proof}

\noindent
Lemma~\ref{pro-DS-caridnality}  shows that a non-regular graph with minimal total irregularity 
must have degree set of cardinality  two and both degrees may  occur more than once. 
Obviously, the total irregularity is smaller if the
two degrees differ  as little as possible, i.e., if they differ by one.
In the next lemma we present sharper conditions on a non-regular graph of odd degree with minimal total irregularity.

\begin{lemma} \label{pro-DS-caridnality-2}
Let $G$ be a connected graph of odd order $n >3$ with  $|D(G)|=2$, such that 
each degree occurs at least two times.
Then, there exists a connected graph $H$ of same order $n$ with degree set $D(H)=\{d+1, d\}$, such that one of the 
degrees occurs only once and ${\irr_t} (H) < {\irr_t} (G)$.
Moreover, $H$ has one of the following  degree sequences
\beq \label{exp-30}
&& (d+1, d, \dots, d, d), \quad 1 \leq d \leq n-2, \quad \text{or}  \nonumber \\
&& (d+2, d+2, \dots, d+2, d+1), \quad 1 \leq d \leq n-4. 
\eeq
\end{lemma}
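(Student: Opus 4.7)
The plan is to split the argument into an elementary $\irr_t$-comparison and the more substantive existence of a connected realization of one of the two claimed degree sequences. Writing the hypothesis as $a \geq 2$ vertices of some degree $D_1$ and $b = n - a \geq 2$ of $D_2$ (with $D_1 > D_2$), concavity of $a(n-a)$ on $\{2, \ldots, n-2\}$ gives $ab \geq 2(n-2)$, so $\irr_t(G) = ab(D_1 - D_2) \geq 2n - 4$. Any realization $H$ of either form has one minority vertex and $n-1$ majority vertices with consecutive degrees, so $\irr_t(H) = 1 \cdot (n-1) \cdot 1 = n-1$. Since $n > 3$, $n-1 < 2n-4$, and the bound $\irr_t(H) < \irr_t(G)$ is automatic once $H$ is produced.

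For existence I would first pin down $d$ by parity: with $n$ odd, the Form 1 sum $nd+1$ and the Form 2 sum $nd+2n-1$ are both even iff $d$ is odd. I would then apply Theorem~\ref{te-tripathi-vijay} at the unique break point of each sequence, namely $r = 1$ for Form 1 and $r = n-1$ for Form 2, where the Erd\H{o}s--Gallai inequality simplifies respectively to $d+1 \leq (n-1)\min(1,d) = n-1$ and $(n-1)(d+4-n) \leq d+1$. The first forces $d \leq n-2$ and the second holds iff $d \leq n - 4$. These are the ranges in the statement, so each sequence with odd $d$ in its stated range is graphical.

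The main obstacle is connectedness, which Theorem~\ref{te-erdos-gallai} does not control. The bad case is Form 1 with $d = 1$: the sequence $(2, 1, \ldots, 1)$ has only $(n+1)/2 < n-1$ edges when $n \geq 5$ and hence no connected realization. I would sidestep this by taking Form 2 with $d = 1$ as a universal fallback: the sequence $(3, \ldots, 3, 2)$ is realized explicitly by a Hamilton cycle $v_1 v_2 \cdots v_n v_1$ together with a perfect matching on $\{v_1, \ldots, v_{n-1}\}$ chosen to avoid the cycle edges, which exists since $n - 1$ is even and the non-cycle pairs on those vertices are plentiful. This yields a connected graph with one vertex of degree $2$ and $n-1$ vertices of degree $3$ for every odd $n \geq 5$, producing the required $H$ for every admissible $G$ and completing the proof.
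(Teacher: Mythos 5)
Your proposal is correct, and for the graphicality part it runs along the same lines as the paper: parity of $nd+1$ and $nd+2n-1$ forces $d$ odd, and the Erd\H{o}s--Gallai inequality is checked via Theorem~\ref{te-tripathi-vijay} at $r=1$ for the first sequence and $r=n-1$ for the second, yielding exactly the ranges $d\leq n-2$ and $d\leq n-4$. Your comparison $\irr_t(G)\geq 2n-4>n-1=\irr_t(H)$ is also the paper's, though you do it a bit more carefully: the paper silently assumes the two degrees of $G$ differ by one, while you allow an arbitrary gap $D_1-D_2\geq 1$, which is what the hypothesis $|D(G)|=2$ actually gives. The genuine difference is connectedness. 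The paper's proof stops at showing the sequences are graphical and never argues that a \emph{connected} realization exists; you correctly observe that this is a real issue, since the Form~1 sequence $(2,1,\dots,1)$ has only $(n+1)/2<n-1$ edges for $n\geq 5$ and therefore admits no connected realization at all (so the paper's stated range $1\leq d\leq n-2$ for the first sequence is not fully achievable by connected graphs). Your fallback --- the Hamilton cycle plus a perfect matching on $\{v_1,\dots,v_{n-1}\}$ avoiding cycle edges, realizing $(3,\dots,3,2)$ for every odd $n\geq 5$ --- repairs the existence claim cleanly. The only thing your route does not deliver is a connected realization for \emph{every} admissible $d$ in the two families (e.g.\ via the standard fact that a graphical sequence with minimum degree at least one and degree sum at least $2(n-1)$ has a connected realization), which matters only insofar as Theorem~\ref{te-graphs-min-irr_t-10} later leans on the lemma as a full characterization of the extremal graphs; for the lemma as literally stated, your proof is complete and in this respect more rigorous than the paper's.
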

\begin{proof}
All possible sequences, where one degree occurs only ones and 
the difference between the two degrees is one,
 are those from (\ref{exp-30}). In the sequel,  we will show that they are indeed graphical sequences.
Since $n$ is odd, $d$ must be odd as well, because otherwise the parity condition 
of Theorem~\ref{te-erdos-gallai} will be not satisfied.
Observe that the range of values of $d$ in (\ref{exp-30}) follow from the fact that $n$ and $d$ are odd and
$d \leq n-1$.

\noindent
First, we show that, for a fixed 
$d$ with $1 \leq d \leq n-2$, the sequence $(d+1, d, \dots, d, d)$ is graphical.
For that, we need to show in addition that (\ref{eq-thm-50}) holds.
In this case, by Theorem~\ref{te-tripathi-vijay}, it suffices to show that (\ref{eq-thm-50}) is satisfied for $r=1$.
Then,  (\ref{eq-thm-50}) can be written as
$$ d+1 \leq \sum\limits_{i=2}^{n} \min(r,d),$$
which obviously holds since $d \leq n-2$ and $r=1$.

\smallskip
\noindent
Next, we show that, for a fixed 
$d$ with $1 \leq d \leq n-4$, the sequence $(d+2, d+2, \dots, d+2, d+1)$ is graphical.
Since $(n-1)(d+2) + d+1$ is even, the parity condition of
Theorem~\ref{te-erdos-gallai} is satisfied. 
By Theorem~\ref{te-tripathi-vijay}, it suffices to show that (\ref{eq-thm-50}) is satisfied for $r=n-1$.
Then,  (\ref{eq-thm-50}) can be written as
$$(n-1)(d+2) + d+1 \leq (n-1)(n-2) + d+1,  \quad \text{or} \quad 0 \leq (n-1)(n-d-4) + d +1.$$
The  last expressions holds since  $d \leq n-4$.

\smallskip
\noindent
Next, we show that ${\irr_t} (H) < {\irr_t} (G)$.
Assume that $G$ has $y$ vertices of degree $d$ and 
$n-y$ vertex of degree $d+1$, where $2 \leq y<n$. Then,
\beq \label{exp-50}
{\irr_t} (G)=y(n-y).
\eeq
With the given constrains, ${\irr_t} (G)$ reaches its minimum of $2n-4$ for $y=2$ and $y=n-2$,
which is larger than ${\irr_t} (H)=n-1$, for $n >3$.
\end{proof}

For an illustration of the above lemma consider the degree sequence
$(n-1, n-2, \dots, n-2, n-2).$ 
A  graph with this degree sequence
can be constructed by deleting $\lfloor n/2 \rfloor$ edges  from
$K_n$, such that no two deleted edges have a common endvertex. 

For the degree sequence
$(n-2, n-2, \dots, n-2, n-3),$ 
a corresponding  graph 
can be constructed by deleting $\lfloor n/2 \rfloor-1$ edges  from
$K_n$, such that no two deleted edges have a common endvertex. 
There remain $3$ vertices with degrees $n-1$. Finally, delete the two edges
that connect one of those vertices with the remaining two.

Now, we are ready to present the sharp lower bound on the total irregularity of the 
connected non-regular graphs, as well as their second and the third minimal value
regarding the total irregularity.

\begin{te}\label{te-graphs-min-irr_t-10}
Let $G$ be a connected non-regular graph with $n$ vertices. Then $irr_t(G) \geq n-1$.
Moreover, this bound is obtained by graphs of odd order, characterized in Lemma~\ref{pro-DS-caridnality-2}.
\end{te}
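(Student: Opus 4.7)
The plan is to combine the two structural lemmas just proved and finish with a short parity check. First, for any non-regular connected $G$ with $|D(G)|\ge 3$, Lemma~\ref{pro-DS-caridnality} supplies a graph $H$ of the same order with $|D(H)|=2$, both degrees repeated at least twice, and $\irr_t(H)<\irr_t(G)$; so it suffices to lower-bound $\irr_t$ on graphs with $|D|=2$. For such a graph, writing $D=\{a,b\}$ with $a>b$ and letting $x$ denote the number of vertices of degree $a$, one has the clean identity $\irr_t=x(n-x)(a-b)$.

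I would then split according to whether both degrees occur at least twice. If $x\in\{1,n-1\}$, then $\irr_t=(n-1)(a-b)\ge n-1$, with equality iff $a-b=1$. If $2\le x\le n-2$, then $x(n-x)\ge 2(n-2)$, so $\irr_t\ge 2n-4$. Either subcase yields $\irr_t\ge n-1$ for $n\ge 3$, which establishes the stated bound for all non-regular connected graphs.

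For sharpness and the characterization I would invoke Lemma~\ref{pro-DS-caridnality-2}: when $n$ is odd it explicitly produces connected graphs with degree sequences $(d+1,d,\dots,d)$ or $(d+2,\dots,d+2,d+1)$, each with exactly one vertex of the larger degree, so $\irr_t=n-1$ is attained. To ensure only odd $n$ can attain the bound, I would check that for even $n$ neither of those two sequence shapes is graphical: the degree sums are $(n-1)d+(d+1)=nd+1$ and $(n-1)(d+2)+(d+1)=nd+2n-1$, both odd when $n$ is even, which violates the parity condition of Theorem~\ref{te-erdos-gallai}. Hence for even $n$ the $x\in\{1,n-1\}$ subcase forces $a-b\ge 2$ and $\irr_t\ge 2n-2$, while the other subcase gives $\irr_t\ge 2n-4$; in both situations $\irr_t>n-1$ strictly.

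The main delicate point will be exactly this parity check for even $n$, since it is what separates the asymptotically different regimes $n-1$ (odd) and $2n-4$ (even) and thereby resolves Conjecture~\ref{conj-graphs-min-irr_t}. The rest is elementary manipulation of the closed form $x(n-x)(a-b)$ on a bounded index set, together with the graphicality of the extremal odd-order sequences already supplied by Lemma~\ref{pro-DS-caridnality-2}.
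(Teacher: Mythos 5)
Your proof is correct and follows essentially the same route as the paper: reduce to two-degree graphs via Lemma~\ref{pro-DS-caridnality}, use the closed form $x(n-x)(a-b)$ from (\ref{exp-50}), minimize over $x$, rule out even $n$ by the parity condition of Theorem~\ref{te-erdos-gallai}, and cite Lemma~\ref{pro-DS-caridnality-2} for sharpness. If anything, your explicit handling of the $a-b\ge 2$ possibility is slightly more careful than the paper's, which simply asserts the minimizer's degrees differ by one.
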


\begin{proof}
Let $H$ be a graph with the minimal total irregulariy.
By Lemmas~\ref{pro-DS-caridnality} and \ref{pro-DS-caridnality-2}, the  degree sequence of $H$ consists of
two deferent degrees that differ by one. Let denote them by $d$ and $d+1$.
Assume that $H$ has $y$ vertices of degree $d$ and 
$n-y$ vertex of degree $d+1$, where $1 \leq y<n$. 
Its total irregularity is $n(n-y)$ (as in the expression (\ref{exp-50})).
With the given constrains, ${\irr_t} (H)$ reaches its minimum of $n-1$ for $y=1$ and $y=n-1$.
If $n$ is even, we cannot obtain a graphical sequence for $y=1$ and $y=n-1$, since then, the sum of all degrees is odd. 
Thus  $y=1$ or $y=n-1$ are feasible solutions only when $n$ is odd.
In this case also the degree that occurs only once must be even.
In Lemma~\ref{pro-DS-caridnality-2}~ the degree sequences of those graphs were fully characterized and 
their total irregularities are $n-1$.
\end{proof}

\begin{te}\label{te-graphs-min-irr_t-20}
The second and the third smallest  value of the total irregularity of connected non-regular graphs of order $n$ are  $2n-4$ and $2n-2$,
respectively, and can be obtained by graphs of order $n$ with an arbitrary parity. 
\end{te}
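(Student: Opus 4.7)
The plan is to prove the theorem in two parts: first, to exhibit non-regular connected graphs of order $n$ attaining $\irr_t = 2n-4$ and $\irr_t = 2n-2$ for every $n \geq 4$ of either parity; and second, to show that no non-regular connected graph has $\irr_t$ strictly less than $2n-2$ except the values $n-1$ (which requires $n$ odd, by Theorem~\ref{te-graphs-min-irr_t-10}) and $2n-4$.

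For the constructions, I would take the cycle $C_n$ together with one chord; this connected graph has two vertices of degree $3$ and $n-2$ vertices of degree $2$, hence $\irr_t = 2(n-2) = 2n-4$, valid for every $n \geq 4$ of either parity. For $\irr_t = 2n-2$, I would attach a single pendant edge to one vertex of $C_{n-1}$, producing a connected graph with degree sequence $(3, 2, \ldots, 2, 1)$ and total irregularity $(n-2)\cdot 1 + 1\cdot 2 + (n-2)\cdot 1 = 2n-2$, again valid for every $n \geq 4$ of either parity.

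For the lower bound, I would rewrite the total irregularity using partial sums of the degree multiplicities. If $d_1 < d_2 < \cdots < d_t$ are the distinct degrees of $G$ with multiplicities $a_l$ and $s_l = a_1 + \cdots + a_l$ (so $1 \leq s_1 < \cdots < s_{t-1} \leq n-1$), a short calculation yields
\[
\irr_t(G) \;=\; \sum_{l=1}^{t-1} (d_{l+1} - d_l)\, s_l\,(n - s_l) \;\geq\; \sum_{l=1}^{t-1} s_l\,(n - s_l).
\]
If $|D(G)| = 2$ and the two degrees differ by some $\delta \geq 2$, then $\irr_t(G) = \delta\, s_1(n - s_1) \geq 2(n-1) = 2n-2$. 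If $|D(G)| = 2$ with gap $1$, then $\irr_t(G) = s_1(n - s_1)$ ranges over $\{n-1, 2n-4, 3n-9, 4n-16, \ldots\}$, and solving $s_1(n-s_1) = 2n-3$ over positive integers yields only the candidate $(n, s_1) = (6, 3)$, whose corresponding degree sequence has odd sum and hence fails the parity condition of Theorem~\ref{te-erdos-gallai}. If $|D(G)| \geq 3$, then $\irr_t(G) \geq s_1(n-s_1) + s_2(n-s_2)$ over integers $1 \leq s_1 < s_2 \leq n-1$; since the function $s \mapsto s(n-s)$ is concave with boundary minimum $n-1$, this sum is minimized at $s_1 = 1$, $s_2 = n-1$, giving $\irr_t(G) \geq 2(n-1) = 2n-2$.

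The main obstacle is the careful treatment of the $|D(G)| = 2$ case with gap $1$: a priori the value $3n-9$ lies in the forbidden interval $(2n-4, 2n-2)$ for $n = 6$, and one must invoke the parity condition of Theorem~\ref{te-erdos-gallai} to rule out the only integer candidate. Combining the explicit constructions with the case-by-case lower bound shows that the achievable values of $\irr_t$ over connected non-regular graphs of order $n$ in the range $[0, 2n-2]$ are exactly $n-1$ (only for odd $n$), $2n-4$, and $2n-2$, which yields the claim.
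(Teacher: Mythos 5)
Your proposal is correct, and it takes a genuinely different route from the paper. The paper's argument leans on its Lemma~\ref{pro-DS-caridnality} (a step-by-step degree-sequence transformation) to reduce to graphs with exactly two degrees, then enumerates the candidate values $y(n-y)$ and the two-degrees-differing-by-two sequences, checking each candidate sequence against the Erd\H{o}s--Gallai condition. You instead derive the identity $\irr_t(G)=\sum_{l=1}^{t-1}(d_{l+1}-d_l)\,s_l(n-s_l)$, which makes all the lower bounds fall out uniformly: the case $|D(G)|\geq 3$ gives $\geq 2n-2$ in one line (each term is at least $n-1$ and at least two terms are present), the gap-$\geq 2$ case likewise, and the gap-$1$ case reduces to the arithmetic of $k(n-k)$. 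This cleanly bypasses the paper's transformation lemma. Your explicit constructions ($C_n$ plus a chord, and $C_{n-1}$ with a pendant edge) also buy you something the paper's route leaves implicit: Erd\H{o}s--Gallai only certifies that \emph{some} realization of a degree sequence exists, not a connected one, whereas your graphs are visibly connected. Two small remarks. First, for ruling out the value $2n-3$ you isolate the unique candidate $(n,s_1)=(6,3)$ and kill it by degree-sum parity; the paper's observation that $\irr_t$ is always even (since $\sum_{\{u,v\}}(d(u)+d(v))=(n-1)\cdot 2m$) disposes of $2n-3$ for all $n$ at once and would shorten that step. Second, your closing sentence correctly characterizes the full set of achievable values below $2n-2$ as $\{n-1\ (\text{odd } n),\,2n-4\}$, which is exactly what is needed to conclude; the residual ambiguity about whether $2n-4$ is the ``first'' or ``second'' smallest value for even $n$ is present in the theorem statement itself and is not a defect of your argument.
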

\begin{proof}
The next smallest solutions of  (\ref{exp-50}) that correspond to a graphic sequence,
as it was shown by Lemma~\ref{pro-DS-caridnality},  are
 $y=2$ and $y=n-2$,  and they are feasible as well when $n$ is even. The corresponding graphs have 
 total irregularity of $2n-4$. 
 
A candidate that may have smaller total irregularity than the above solutions is
 a graph with two degrees such that they differ by more than one,  and one degree sequence 
 occurs only once.
 Obviously, if the difference between the two degrees is larger (than two) the total irregularity
 will be larger as well.
 Thus, let consider first the case where the degrees differ by two. 
 If it results in total irregularity that is larger than $2n-4$,
 then we do not need to check the cases where the difference between the two degrees is larger than two.
 Indeed, the total irregularity of a graph with degree sequence $(d, d-2, \dots, d-2, d-2),  4 \leq d$, if such graph exists,
 is $2n-2$.  Therefore, it follows that $2n-4$ is the second smallest value of the total irregularity
 of connected non-regular graphs.

Note that for every graph the total irregularity is an even number, since the number of vertices
of odd degree is even.
Thus, the value $2n-3$ is excluded as value of total irregularity and
the next candidate for the third smallest value is $2n-2$.
Next we show that the degree sequences
\beq \label{exp-30-10}
 (d, d-2, \dots, d-2, d-2), && \quad d=3 \; \text{and} \; n=4; \; \text{or}  \; 4 \leq d,  d  \; \text{is even} \; \text{and} \;d+1<n; \nonumber \\
&&  \quad  \text{or}  \; 5 \leq d,   d \; \text{is} \; \text{odd},  n \; \text{is even} \;  \text{and} \;d+1<n; \nonumber \\
 (d, d, \dots, d, d-2), &&\quad 4 \leq d, d \; \text{is} \; \text{even} \; \text{and} \;d+1<n; \nonumber \\
&&  \quad  \text{or}  \; 3 \leq d, d \; \text{is} \; \text{odd}, n \; \text{is even} \; \text{and} \; \;d+1<n, \nonumber 
\eeq
are graphical. The condition for the degree sequences ensures that the parity condition of Erd\H{o}s-Gallai theorem is satisfied,
and that degree sequence may belong to a connected graph.
Next, we show that the above degree sequences satisfy also the relation (\ref{eq-thm-50}).

\smallskip
\noindent
First consider the sequences $(d, d-2, \dots, d-2, d-2)$. 
By Theorem~\ref{te-tripathi-vijay}, it suffices to show that (\ref{eq-thm-50}) is satisfied for $r=1$.
Thus, (\ref{eq-thm-50}) can be written as
$$d \leq   \sum\limits_{i=2}^{n} \min(r,d-2).$$
Obviously, the inequality holds, since $r=1$ and $d<n-1$.

\smallskip
\noindent
Next, consider the sequences $(d, d, \dots, d, d-2)$. 
In this case, by Theorem~\ref{te-tripathi-vijay}, it suffices to show that (\ref{eq-thm-50}) is satisfied for $r=n-1$.
Then, (\ref{eq-thm-50}) can be written as
$$(n-1)d \leq  (n-2)(n-1)  +d-2,  \quad \text{or} \quad 0 \leq  (n-2-d)(n-1)  +d-2.$$
From $d < n-1$ and $n \geq 3$, it follows that the last inequalities hold.
\end{proof}

\noindent
We would like to note, that for disconnected non-regular graphs, similarly as for connected graphs, one may obtain
the above presented bounds simply by including zero as a possible
degree.
For example, the sharp lower bound in the case of disconnected non-regular graphs of $n-1$
can be obtained  by a graph of odd order $n$ with degree sequence
$(1, 1, \dots, 1, 0)$, consisting of  $(n-1)/2$ pairs of adjacent vertices and one isolated vertex.


%

\begin{thebibliography}{99}



\bibitem{Dimit-Abdo1} 
H.~Abdo, N. Cohen, D.~Dimitrov, \textit{Graphs with maximal irregularity}, Filomat, in press, 2014.
%
\bibitem{Dimit-Abdo2} 
H.~Abdo, S.~Brandt, D.~Dimitrov, \textit{The total irregularity of a graph}, 
Discrete Math. Theor. Comput. Sci. \textbf{16} (2014) 201--206.

%
\bibitem{Alavi-87} 
Y.~Alavi, G.~Chartrand, F.~R.~K.~Chung, P.~Erd\H{o}s, R.~L.~Graham, O.~R.~Oellermann, \textit{Highly irregular graphs},
J. Graph Theory \textbf{11} (1987) 235--249.
%
\bibitem{Alavi-88}
Y.~Alavi, A.~Boals, G.~Chartrand, P.~Erd\H{o}s, O.~R.~Oellermann, \textit{$k$-path irregular graphs},
Congr. Numer. \textbf{65} (1988) 201--210.
%
\bibitem{Albertson} 
M.~O.~Albertson, \textit{The irregularity of a graph}, Ars Comb. \textbf{46} (1997) 219--225.
%

\bibitem{Bell:1} 
F.~K.~Bell, \textit{A note on the irregularity of graphs}, Linear Algebra Appl. \textbf{161} (1992) 45--54.
%
\bibitem{Char-87}
G.~Chartrand, K.~S.~Holbert, O.~R.~Oellermann, H.~C.~Swart, \textit{$F$-degrees in graphs},
Ars Comb. \textbf{24} (1987) 133--148.
%
\bibitem{Char-88}
G.~Chartrand, P.~Erd\H{o}s, O.~R.~Oellermann, \textit{How to define an irregular graph},
Coll. Math. J. \textbf{19} (1988) 36--42.
%
\bibitem{Char-in-88}
G.~Chartrand, M.~S.~Jacobson, J.~Lehel, O.~R.~Oellermann, S.~Ruiz, F.~Saba, \textit{Irregular
networks},
Congr. Numer. \textbf{64} (1988) 197--210.
%
\bibitem{CollSin-57} 
L.~Collatz,  U.~Sinogowitz, \textit{Spektren endlicher Graphen}, Abh. Math. Sem. Univ. Hamburg \textbf{21} (1957) 63--77.
%
\bibitem{DS-citig-2012} 
D.~Dimitrov, R.~{\v S}krekovski, \textit{Comparing the irregularity and the total irregularity of graphs},  Ars Math. Contemp.
 \textbf{9} (2015) 25--30, to appear.
%
\bibitem{eg-ggdv-60}
P.~Erd\H{o}s, T.~Gallai, Graphs with prescribed degrees of vertices, 
(in Hungarian) \textit{Mat. Lapok.} \textbf{11} (1960) 264--274.
%
\bibitem{HanM05}
P.~Hansen, H.~M\'elot, \textit{Variable neighborhood search for extremal graphs 9. Bounding the irregularity of a graph},
DIMACS Ser. Discrete Math. Theor. Comput. Sci. \textbf{69} (2005) 253--264.
%
\bibitem{HenRaut-07} 
M.~A.~Henning, D.~Rautenbach, \textit{On the irregularity of bipartite graphs}, Discrete Math. \textbf{307} (2007) 1467--1472.
%
\bibitem{tv-nteg-03} 
A.~Tripathi, S.~Vijay, \textit{A note on a theorem of Erd\H{o}s \& Gallai}, Discrete Math. \textbf{265} (2003) 417--420.

\bibitem{ljz-mtiug-2014} 
L.~You, J.~Yang, Z.~You, \textit{The maximal total irregularity of unicyclic graphs}, 
Ars Comb., in press, 2014.

\bibitem{ljz-mtibg-2014} 
L.~You, J.~Yang, Z.~You, \textit{The maximal total irregularity of bicyclic graphs}, 
J. Appl. Math. \textbf{2014} (2014), Article ID 785084, 9 pages, http://dx.doi.org/10.1155/2014/785084. 

\bibitem{zyy-mtig-2014} 
Y.~Zhu, L.~You, J.~Yang, \textit{The minimal total irregularity of graphs}, arXiv:1404.0931v1, 3. April 2014.

\end{thebibliography}
\end{document}